\documentclass[
 reprint, superscriptaddress,
 amsmath,amssymb,
 aps,
 prl,
 a4paper]{revtex4-2}
\pdfoutput=1 %

\usepackage{braket}
\usepackage{amsmath,amssymb,amsfonts}
\usepackage{amsthm}
\usepackage{mathtools}
\usepackage{graphicx}
\usepackage{xcolor}
\usepackage{framed}
\usepackage[colorlinks=true,
            allcolors=blue]{hyperref}
\usepackage[capitalize,nameinlink]{cleveref}
\usepackage{comment}

\newcommand{\be}{\begin{equation}}\newcommand{\ee}{\end{equation}}\newcommand{\ba}{\begin{eqnarray}}\newcommand{\ea}{\end{eqnarray}}\newcommand{\ban}{\begin{eqnarray*}}\newcommand{\ean}{\end{eqnarray*}}

\newcommand{\one}{\mathbb{I}}

\newtheorem{theorem}{Theorem} %

\newtheorem{corollary}{Corollary}[theorem]

\newtheorem{definition}{Definition}
\theoremstyle{definition}

\theoremstyle{remark}

\def\sH{\mathcal{H}}

\def\openone{\mathds{1}}

\newcommand{\X}{\mathcal{X}}

\renewcommand{\set}[1]{\mathcal{#1}}

\usepackage{dsfont}
\usepackage{enumitem}

\newcommand{\Tr}{\mathrm{Tr}}

\newcommand{\tN}{\mathbf{N}}
\newcommand{\tU}{\mathbf{U}}
\newcommand{\tA}{\mathbf{A}}

\renewcommand{\ge}{\geqslant}

\renewcommand{\geq}{\geqslant}
\renewcommand{\leq}{\leqslant}

\newcommand{\IWY}{I_{\rm WY}}

\DeclareMathOperator{\diag}{diag}

\begin{document}

\title{Quantifying Measurement Objectivity: A Retrodictive Approach}

\author{Jiaxi Kuang}
\email{kuang.jiaxi.g5@s.mail.nagoya-u.ac.jp}
\affiliation{Department of Mathematical Informatics, Nagoya University, Furo-cho Chikusa-ku, Nagoya 464-8601, Japan}

\author{Teruaki Nagasawa}
\email{teruaki-nagasawa@se.kanazawa-u.ac.jp}
\affiliation{Department of Mathematical Informatics, Nagoya University, Furo-cho Chikusa-ku, Nagoya 464-8601, Japan}
\affiliation{Institute of Science and Engineering, Kanazawa University, Kanazawa, Ishikawa, 920-1192, Japan}

\author{Kensei Torii}
\email{toriikensei@nagoya-u.jp}
\affiliation{Department of Mathematical Informatics, Nagoya University, Furo-cho Chikusa-ku, Nagoya 464-8601, Japan}

\author{Francesco Buscemi}
\email{buscemi@nagoya-u.jp}
\affiliation{Department of Mathematical Informatics, Nagoya University, Furo-cho Chikusa-ku, Nagoya 464-8601, Japan}

\begin{abstract}
When can one interpret the outcomes of a quantum measurement as revealing a pre-existing objective property? Using the recently developed formalism of quantum measurement retrodiction, we provide a quantitative treatment of this question: for any POVM and faithful prior state, we construct a positive semidefinite bilinear form that quantifies the non-objectivity of every real-valued outcome feature through the disagreement between its predictive value and its retrodictive counterpart. We show that this form decomposes exactly into the sum of two positive semidefinite bilinear forms: an unsharpness form and an asymmetry form given by Wigner--Yanase skew information. The total form vanishes precisely on those outcome features that can be interpreted, relative to the prior, as revealing pre-existing properties; in particular, it vanishes identically if and only if the POVM is sharp and commutes with the prior. Finally, under maps that preserve the prior and are covariant under its modular group, asymmetry cannot increase, and any loss of asymmetry is offset by at least as much unsharpness, so that total non-objectivity cannot decrease.
\end{abstract}
\maketitle

\noindent {\em Introduction.}---In one possible sense, a measurement outcome may be regarded as \textit{objective} when it admits an interpretation as a faithful record of a property already present \textit{before} the measurement. This property-revealing idea is related to the EPR discussion of physical reality~\cite{einstein1935can}. However, the Bell and Kochen--Specker theorems place well-known restrictions on property-revealing interpretations of quantum measurements~\cite{Bell1964,KochenSpecker1967,mermin1993hidden}. In a classical theory, by contrast, such an interpretation is straightforward whenever the outcome is a function of the system's pre-existing state. Here we ask the corresponding, narrower question: given a positive operator-valued measure (POVM) and a prior state, when does the measurement admit this classical property-revealing interpretation? If the complete outcome does not, which functions of the outcome, if any, retain it, and how can the remaining non-objectivity be quantified?

This prior-relative question differs from the emergence of objectivity studied in quantum Darwinism, strong quantum Darwinism, and spectrum broadcast structures, which concerns how information about a system becomes redundantly encoded in its environment and independently accessible to many observers~\cite{ollivier2004objective,blume2006quantum,zurek2009quantum,le2019strong,korbicz2014objectivity,horodecki2015quantum}. It also differs from \emph{intersubjectivity}, which concerns agreement among different observers on the outcomes of their measurements~\cite{chisholm2023redundancy,poderini2023nonobjectivity,ozawa2025intersubjectivity,candeloro2026intersubjectivity,umekawa2026intersubjectivity}.

Here we argue that the formalism of quantum measurement retrodiction provides an operational criterion for this question. Given an observed outcome, a \textit{principle of minimum change} provides a state assignment for retrodictive inferences about the system \textit{before} the measurement~\cite{bai-2025-q-bayes-minimum-change,kuang2026quantum}. Evaluating the same POVM on this state tests whether it predicts the observed outcome with certainty. We show that prediction and retrodiction agree in this sense for every outcome exactly when the POVM is sharp and commutes with the prior. This criterion is distinct from \textit{post-measurement repeatability}: repeatability depends on the instrument implementing the POVM, whereas retrodictive agreement depends only on the POVM and the prior.

Even when the complete outcome fails this criterion, a coarse-graining of the outcome may still reveal a sharp property compatible with the prior. We first identify the finest such coarse-graining~\cite{Nagasawa-2025-macrostates-ROPP}; its values partition the outcomes into blocks. Then, within each resulting block, we consider arbitrary real-valued functions of the outcome, which we call \emph{outcome features}. The Hirschfeld--Gebelein--R\'enyi maximum correlation quantifies the greatest retrodictive agreement attainable by a normalized feature~\cite{Hirschfeld_1935,gebelein1941statistische,renyi1959measures}. However, rather than focusing only on this scalar quantity, our main result describes non-objectivity through three positive-semidefinite bilinear forms, or \textit{tensors}, resolving it in every ``feature direction''. The tensor of total non-objectivity splits exactly into two contributions, an unsharpness tensor and an asymmetry tensor. Unsharpness measures the failure of the outcomes to define mutually exclusive sharp properties and may remain nonzero even when the measurement commutes with the prior~\cite{hall2004prior,busch2004noise,massar2007uncertainty,liu2021quantifying}. Asymmetry is given by Wigner--Yanase skew information and measures noncommutativity with the prior~\cite{wigner1963information}. When optimized over normalized features, the total tensor gives back the scalar HGR non-objectivity. The two contributions vanish together exactly when the POVM is a projection-valued measure commuting with the prior, precisely the objective case identified above.

Finally, we establish the monotonicity of all three tensors, in the Loewner order, under a physically relevant class of maps that we call $\gamma$-covariant. These are maps that commute with the modular group of $\gamma$ and whose duals preserve $\gamma$. Such maps cannot increase asymmetry, while any decrease in asymmetry is accompanied by an increase in unsharpness that is at least as large. Total non-objectivity therefore cannot decrease. This single tensor inequality controls every outcome feature, as well as the optimized HGR non-objectivity.

\ 

\noindent {\em Retrodictive criterion for objectivity.}---Let $\mathsf P=\{P_x\}_{x\in\X}$ be a finite POVM on $\sH$ and let $\gamma>0$ be a faithful prior. According to Born's rule, we write $p_x:=\Tr[\gamma P_x]$. We assume that the POVM contains no null effects, so that $p_x>0$ for every $x$.

Classically, Bayes' rule follows from the minimum-change principle (MCP): upon acquiring new data, one updates the reverse conditional distribution so as to incorporate them while altering the prior joint input--output process as little as possible. Recent works generalized MCP to the quantum setting~\cite{bai-2025-q-bayes-minimum-change,kuang2026quantum}. For a POVM $\mathsf P$ and a faithful prior $\gamma$, they showed that, upon obtaining outcome $x$, MCP uniquely yields the retrodictive update
\begin{equation}
    \gamma_x:=\frac{\sqrt\gamma P_x\sqrt\gamma}{p_x}.
\end{equation}
Thus, $\gamma_x$ is the quantum analogue of the Bayesian posterior: it provides a state assignment, conditioned on the observed outcome, for retrodictive inferences about the system before the measurement.

Using it to predict the outcome of the same POVM gives the retrodictive joint law
\begin{equation}\label{eq:joint-law}
    T_{xx'}:=p(x,x')=\Tr[\sqrt\gamma P_x\sqrt\gamma P_{x'}].
\end{equation}
It is nonnegative and symmetric. It is also positive semidefinite, since it is the Gram matrix of the operators $\gamma^{1/4}P_x\gamma^{1/4}$ in the Hilbert--Schmidt inner product. Its marginals are
\begin{equation*}\label{eq:marginal}
    \sum_{x'}T_{xx'}=p_x,
    \qquad
    \sum_xT_{xx'}=p_{x'},
\end{equation*}
and $\sum_{xx'}T_{xx'}=1$.

\begin{definition}[Retrodictive criterion]\label{def:measurement_objectivity}
The retrodictive criterion for a POVM $\mathsf P$ to be objective relative to $\gamma$ is that its retrodictive joint law be diagonal, i.e.,
\begin{equation}
    T_{xx'}=p_x\delta_{xx'}
    \qquad\forall x,x'.
\end{equation}
\end{definition}

Operationally, the retrodictive criterion says that prediction and retrodiction agree. Since $p_x>0$, it is equivalent to $\Tr[\gamma_xP_{x'}]=\delta_{xx'}$: the retrodictive state predicts the observed outcome with certainty. This is not a statement about the repeatability of an instrument, because $\gamma_x$ is not a post-measurement output state.

As we will see in what follows, the physical content of the retrodictive criterion comes from Theorem~\ref{thm:tensor-zero-conditions}: it is satisfied exactly when $\mathsf P$ is a PVM and $[\gamma,P_x]=0$ for every $x$. Under these conditions,
\begin{align}
    \gamma_x&=\frac{P_x\gamma P_x}{p_x},
    &\gamma&=\sum_xp_x\gamma_x,\notag\\
    P_x\gamma_x&=\gamma_x,
    &\Tr[\gamma_xP_{x'}]&=\delta_{xx'}.
    \label{eq:objective-sector-decomposition}
\end{align}
Thus the prior admits a canonical decomposition, relative to $\mathsf P$, into states supported on mutually exclusive sharp sectors. In the component $\gamma_x$, the property $P_x$ is definite. Moreover, every real-valued function $v=(v_x)\in\mathbb{R}^{\set{X}}$ defines the operator
\begin{equation}\label{eq:feature-observable}
    O_v:=\sum_xv_xP_x.
\end{equation}
Under the present conditions, $O_v$ is a sharp observable satisfying $O_v\gamma_x=v_x\gamma_x$. Hence every function of the outcome has a definite value in each sector, and the measurement is objective relative to $\gamma$ in the sense that it can be regarded as revealing which pre-existing value obtains rather than creating that value.

Note that Eq.~\eqref{eq:objective-sector-decomposition} is relative to the observable-prior pair and does not determine the preparation history or measurement implementation. If the decomposition is a proper mixture, or \emph{Gemenge}, the weights $p_x$ have a literal classical-ignorance interpretation; for an improper reduced state, no such inference about individual realizations follows~\cite{mittelstaedt1998interpretation,despagnat1976conceptual}. Likewise, a POVM does not fix an instrument~\cite{davies1970operational}: although a commuting PVM admits a repeatable L\"uders instrument whose nonselective action leaves $\gamma$ invariant~\cite{luders2006statechange}, the same PVM may be implemented destructively.

\ 

\noindent {\em Exact objective information.}---The retrodictive criterion in Definition~\ref{def:measurement_objectivity} asks whether the complete outcome label can be interpreted as revealing a pre-existing sharp property. This is a strong requirement. Its failure does not imply that no part of the outcome is objective relative to $\gamma$: a nontrivial deterministic function of the outcome may still label mutually exclusive sharp sectors of the prior.

The mere existence of such a function is not informative by itself. Indeed, the constant function maps every outcome to the same value and produces the one-outcome POVM $\{\openone\}$, which is always objective relative to $\gamma$ but carries no information. We must therefore identify the \emph{finest} objective deterministic coarse-graining and then quantify the distinctions that remain.

A deterministic coarse-graining $f:\X\to\mathcal Y$ produces the POVM $\Pi_y^{(f)}:=\sum_{x:f(x)=y}P_x$. By Definition~\ref{def:measurement_objectivity} and Theorem~\ref{thm:tensor-zero-conditions}, its output is objective relative to $\gamma$ precisely when $\{\Pi_y^{(f)}\}_y$ is a PVM commuting with $\gamma$. In that case, $y$ labels a pre-existing sharp sector determined by the original outcome $x$. We call such a coarse-graining a prior-compatible projective post-processing.

Previous work~\cite{Nagasawa-2025-macrostates-ROPP} shows that every pair $(\mathsf P,\gamma)$ admits, up to relabeling, a unique \textit{finest} prior-compatible projective post-processing. We denote its PVM by $\Pi=\{\Pi_y\}_{y\in\mathcal Y}$ and call it the maximal prior-compatible projective post-processing (MPPP). Its fibers $\X_y:=f^{-1}(y)$ form a disjoint partition of $\X$, with $\Pi_y=\sum_{x\in\X_y}P_x$. Thus, $y$ is the finest pre-existing sharp property that can be inferred deterministically from $x$.

We may therefore analyze separately what remains within each block $\X_y$. On the subspace $\mathcal H_y:=\Pi_y\mathcal H$, the corresponding prior and POVM are
\begin{equation}
    \gamma^{(y)}:=\frac{\Pi_y\gamma\Pi_y}{\Tr[\gamma\Pi_y]},
    \qquad
    \mathsf P^{(y)}:=\{P_x|_{\mathcal H_y}\}_{x\in\X_y}.
\end{equation}
A one-outcome block is trivial and requires no further analysis. In every nontrivial block, maximality of the MPPP rules out any further sharp property compatible with the prior. In what follows, we fix one such block. All quantities appearing there, including $p_x$, $T$, and $D$, are constructed from the restricted pair $(\mathsf P^{(y)},\gamma^{(y)})$, but, for the sake of readability, we suppress the superscript $y$.

\ 

\noindent {\em Feature extraction.}---Within the fixed nontrivial MPPP block, the sharp sector label revealed by the outcome has already been extracted. We now quantify the objectivity of the remaining distinctions. Recall the real-valued functions $v=(v_x)\in\mathbb{R}^{\set{X}}$ introduced in Eq.~\eqref{eq:feature-observable}. We call each such function an \emph{outcome feature}. Let us set
\begin{equation}
    D:=\diag(p_x),\qquad \tN:=D-T.
\end{equation}
Let $X$ and $X'$ be random variables on $\set{X}$ with joint law Eq.~\eqref{eq:joint-law}. Here, $X$ is the ``predictive'' outcome label and $X'$ its ``retrodictive'' counterpart: conditional on $X=x$, $X'$ is distributed according to the probabilities assigned by the retrodictive state $\gamma_x$ to the same POVM. Accordingly, $v(X)$ is the predictive feature value and $v(X')$ its retrodictive counterpart. Since both marginals are the same,
\begin{align}
    v^{\mathsf T}\tN v
    &=
    \mathbb E_X[v(X)^2]
    -
    \mathbb E_{XX'}[v(X)v(X')] \notag\\
    &=
    \frac12\,\mathbb E_{XX'}\!\left[(v(X)-v(X'))^2\right].
    \label{eq:feature-error}
\end{align}
Thus, $\tN$ is positive semidefinite, and $v^{\mathsf T}\tN v$ is half the mean-square disagreement between the predictive and retrodictive values of the feature. This is the feature-level counterpart of the full-outcome criterion: the quadratic form vanishes precisely when the retrodictive state predicts the observed value of the feature with certainty.

To compare features independently of arbitrary biases, scales, and physical units, we restrict to zero-mean, unit-variance features:
\begin{equation*}
    v^{\mathsf T}De=0,
    \qquad
    v^{\mathsf T}Dv=1,
\end{equation*}
where $e:=(1,\ldots,1)^{\mathsf T}$.

\begin{definition}[Scalar feature non-objectivity]
\label{def:feature_non-objectivity}
The scalar feature non-objectivity of $\mathsf P$ relative to $\gamma$ is
\begin{equation}
    \nu_\gamma(\mathsf P)
    :=
    \min_{\substack{v^{\mathsf T}De=0\\v^{\mathsf T}Dv=1}}
    v^{\mathsf T}\tN v.
\end{equation}
\end{definition}

The Hirschfeld--Gebelein--R\'enyi maximum correlation between the predictive and retrodictive outcome labels $(X,X')$ is~\cite{Hirschfeld_1935,gebelein1941statistische,renyi1959measures}
\begin{equation}
    \xi_{\mathrm{HGR}}(X,X')
    :=
    \max_{\substack{v^{\mathsf T}De=w^{\mathsf T}De=0\\v^{\mathsf T}Dv=w^{\mathsf T}Dw=1}}
    v^{\mathsf T}Tw.
\end{equation}
Since the two marginals coincide and $D^{-1/2}TD^{-1/2}$ is symmetric and positive semidefinite, this maximum is attained with the same feature on both sides. Moreover, for every centered, unit-variance feature,
\begin{equation*}
    v^{\mathsf T}\tN v=1-v^{\mathsf T}Tv.
\end{equation*}
Consequently,
\begin{equation}\label{eq:HGR}
    \nu_\gamma(\mathsf P)
    =
    1-\xi_{\mathrm{HGR}}(X,X').
\end{equation}

Within the fixed nontrivial MPPP block, this quantity is strictly positive. Indeed, if $\nu_\gamma(\mathsf P)=0$, Eq.~\eqref{eq:feature-error} implies that an optimizing standardized feature satisfies $v(X)=v(X')$ almost surely. Since $v$ has unit variance, it is nonconstant. Grouping outcomes that have the same value of $v$ therefore gives a nontrivial deterministic coarse-graining for which the predictive and retrodictive labels agree with certainty. This would yield a nontrivial objective refinement of the MPPP, contradicting its maximality. Therefore,
\begin{equation}\label{eq:feature-strict}
    \nu_\gamma(\mathsf P)>0\;.
\end{equation}
Note that the definition of $\nu_\gamma(\mathsf P)$ itself applies to any POVM with at least two non-null outcomes; the restriction to a nontrivial MPPP block is used only to guarantee strict positivity.

Finally, applying Eq.~\eqref{eq:feature-error} to the indicator features $v^{(x')}_x:=\delta_{x,x'}$, or equivalently to their centered and normalized versions, shows that agreement of all indicator features is equivalent to the full-outcome criterion in Definition~\ref{def:measurement_objectivity}. Thus, objectivity of the complete outcome label is a special case of the feature analysis.

\ 

\noindent {\em The non-objectivity tensors.}---The scalar $\nu_\gamma(\mathsf P)$ retains only the smallest retrodictive discrepancy among standardized features. Equation~\eqref{eq:feature-error} shows that its underlying matrix $\tN=D-T$ quantifies half the mean-square disagreement between predictive and retrodictive feature values in every feature direction. This matrix is defined for any finite POVM and faithful prior. We now return from the fixed MPPP block to an arbitrary pair $(\mathsf P,\gamma)$; from this point on, $p_x$, $T$, $D$, and the tensors refer to this unrestricted pair. We call $\tN$ the total non-objectivity tensor.

We decompose this tensor as
\begin{equation}
    \tN=\tU+\tA,
\end{equation}
where $\tU$ is the unsharpness tensor and $\tA$ is the asymmetry tensor. Unsharpness quantifies the failure of the outcome effects to define mutually exclusive sharp properties, while asymmetry quantifies their noncommutativity with the prior. The latter has the interpretation of prior coherence between outcome sectors once the former vanishes. Explicitly,
\begin{subequations}
\begin{align}
    \tU_{xx'}&:=\delta_{xx'}\Tr[\gamma P_x]
    -\frac12\Tr[\gamma\{P_x,P_{x'}\}],\\
    \tA_{xx'}&:=\frac12\Tr[\gamma\{P_x,P_{x'}\}]
    -\Tr[\sqrt\gamma P_x\sqrt\gamma P_{x'}],
\end{align}
\end{subequations}
where $\{A,B\}:=AB+BA$ denotes the anticommutator. Note that the intermediate matrix
\begin{equation*}
    S_{xx'}:=\frac12\Tr[\gamma\{P_x,P_{x'}\}]
\end{equation*}
also has an estimation-theoretic meaning. Hall's prior-dependent least-squares construction~\cite{hall2004prior} identifies $S_{xx'}/p_x$ as the optimal estimate of the effect $P_{x'}$ from outcome $x$.

\begin{theorem}[Positive-semidefinite decomposition]
\label{theo:positive_semidefinite}
For every finite POVM $\mathsf P=\{P_x\}_{x\in\X}$ and every faithful prior $\gamma>0$, we have $\tU\ge0$, $\tA\ge0$, and, consequently, $\tN=\tU+\tA\ge0$.
\end{theorem}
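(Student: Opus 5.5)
We prove both inequalities by evaluating each quadratic form on an arbitrary real vector $v\in\mathbb R^{\X}$ and rewriting it in terms of the operator $O_v=\sum_x v_xP_x$ from Eq.~\eqref{eq:feature-observable}. Since both tensors are real symmetric matrices, positive semidefiniteness is equivalent to $v^{\mathsf T}\tU v\ge0$ and $v^{\mathsf T}\tA v\ge0$ for all real $v$. The claim $\tN\ge0$ then follows by adding the two; it is also consistent with Eq.~\eqref{eq:feature-error}.

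For the unsharpness tensor, bilinearity gives
\begin{equation*}
v^{\mathsf T}\tU v=\sum_x v_x^2\Tr[\gamma P_x]-\Tr[\gamma O_v^2]=\Tr\Big[\gamma\Big(\sum_x v_x^2P_x-O_v^2\Big)\Big].
\end{equation*}
We then show the operator inequality $\sum_x v_x^2P_x\ge \big(\sum_x v_xP_x\big)^2$. This is the operator Jensen (Kadison--Schwarz) inequality for the unital positive map $\Phi:\mathbb R^{\X}\to B(\sH)$, $e_x\mapsto P_x$, applied to the function $v$ on the commutative domain. Positive maps with commutative domain are completely positive, so Kadison--Schwarz applies. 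Concretely, one can also use the Naimark dilation $P_x=V^\dagger Q_xV$, with $Q_x$ orthogonal projections and $V$ an isometry. Then $\sum_x v_x^2P_x=V^\dagger O'^2V$ with $O'=\sum_x v_xQ_x$, while $O_v^2=V^\dagger O'VV^\dagger O'V$. The difference equals $V^\dagger O'(\one-VV^\dagger)O'V\ge0$. Since $\gamma>0$, the trace against $\gamma$ is nonnegative.

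For the asymmetry tensor, bilinearity gives
\begin{equation*}
v^{\mathsf T}\tA v=\Tr[\gamma O_v^2]-\Tr[\sqrt\gamma O_v\sqrt\gamma O_v]=-\tfrac12\Tr\big([\gamma^{1/2},O_v]^2\big),
\end{equation*}
which is the Wigner--Yanase skew information $\IWY(\gamma,O_v)$ of the Hermitian operator $O_v$. Because $O_v$ is Hermitian, $i[\gamma^{1/2},O_v]$ is Hermitian, so $-[\gamma^{1/2},O_v]^2=(i[\gamma^{1/2},O_v])^2\ge0$ and its trace is nonnegative. Equivalently, $\Tr[\gamma O^2]-\Tr[\gamma^{1/2}O\gamma^{1/2}O]=\tfrac12\|[\gamma^{1/4}\cdot]\|$-type expressions can be avoided by this one-line commutator identity, which we verify by expanding $[\gamma^{1/2},O]^2$ and using cyclicity of the trace.

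The only step that is not a direct expansion is the operator inequality $\sum_x v_x^2P_x\ge O_v^2$ for a non-projective POVM, which we settle through the Naimark dilation as above. The remaining care is checking that the cross terms in the bilinear expansions produce exactly the anticommutator expression $\tfrac12\Tr[\gamma\{P_x,P_{x'}\}]$. Both $\tU$ and $\tA$ are symmetric and real, so the real quadratic forms suffice.
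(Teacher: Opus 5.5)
Your proof is correct. For $\tA$ it is essentially the paper's argument, but for $\tU$ it takes a different route.

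\textbf{The $\tU$ part.} The paper works directly with complex coefficients $c$ and the generally non-Hermitian operator $Q=\sum_x c_xP_x$. There the anticommutator term becomes $\frac12\Tr[\gamma(Q^\dagger Q+QQ^\dagger)]$. The paper bounds both pieces with the explicit sum-of-squares identity $\sum_x|c_x|^2P_x-Q^\dagger Q=\sum_x(c_x\one-Q)^\dagger P_x(c_x\one-Q)$ and its analogue for $QQ^\dagger$. You instead reduce to real $v$ and get $\sum_x v_x^2P_x\ge O_v^2$ from Kadison--Schwarz or a Naimark dilation.

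The paper's identity is elementary and self-contained, and it certifies positivity as a Hermitian form with no further remark. Your route shows that $\tU\ge0$ is an instance of the same kind of Kadison--Schwarz inequality later used in Theorem~\ref{theo:monotonicity}. The dilation also gives an explicit noise operator $V^\dagger O'(\one-VV^\dagger)O'V$. The cost is heavier machinery plus the reduction to real vectors.

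\textbf{The reduction to real vectors.} It is valid, but you only assert that the tensors are real. The one-line justification is:
\begin{itemize}
\item $\frac12\Tr[\gamma\{P_x,P_{x'}\}]=\mathrm{Re}\,\Tr[\gamma P_xP_{x'}]$.
\item $T_{xx'}$ is the Hilbert--Schmidt inner product of the Hermitian operators $\gamma^{1/4}P_x\gamma^{1/4}$ and $\gamma^{1/4}P_{x'}\gamma^{1/4}$, hence real.
\end{itemize}
So $\tU$ and $\tA$ are real symmetric, and $c^\dagger Mc=a^{\mathsf T}Ma+b^{\mathsf T}Mb$ for $c=a+ib$.

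\textbf{The $\tA$ part.} Your commutator identity is the paper's Gram-matrix representation $\tA_{xx'}=\frac12\Tr([\sqrt\gamma,P_x]^\dagger[\sqrt\gamma,P_{x'}])$ evaluated at $v$. This works because $[\gamma^{1/2},O_v]$ is anti-Hermitian for real $v$.

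The garbled sentence beginning ``Equivalently'' in your $\tA$ paragraph adds nothing and can be deleted.
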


\begin{proof}
Fix $c\in\mathbb C^\X$ and let $Q:=\sum_xc_xP_x$. Then
\begin{equation*}
c^\dagger\tU c
=\Tr\left[\gamma\left(\sum_x|c_x|^2P_x-\frac12(Q^\dagger Q+QQ^\dagger)\right)\right]\geq0,
\end{equation*}
because
\begin{equation*}
\sum_x|c_x|^2P_x-Q^\dagger Q
=\sum_x(c_x\one-Q)^\dagger P_x(c_x\one-Q)\geq0,
\end{equation*}
and, similarly,
\begin{equation*}
\sum_x|c_x|^2P_x-QQ^\dagger
=\sum_x(c_x\one-Q)P_x(c_x\one-Q)^\dagger\geq0.
\end{equation*}
Hence $\tU\geq0$.

Moreover,
\begin{equation*}
\tA_{xx'}
=\frac12\Tr\left([\sqrt\gamma,P_x]^\dagger[\sqrt\gamma,P_{x'}]\right).
\end{equation*}
Thus $\tA$ is a Gram matrix and $\tA\geq0$. Finally, $\tN=\tU+\tA\geq0$.
\end{proof}

The decomposition $\tN=\tU+\tA$ has a direct operator interpretation. For every real-valued feature $v=(v_x)_{x\in\X}$, recall the operator $O_v$ defined in Eq.~\eqref{eq:feature-observable} and define the second-moment operator
\begin{equation}\label{eq:feature_operator}
    O_{v^2}:=\sum_xv_x^2P_x.
\end{equation}
The three tensors then satisfy
\begin{subequations}
\begin{align}
    v^{\mathsf T}\tU v
    &=\Tr[\gamma O_{v^2}]-\Tr[\gamma O_v^2],\label{eq:first-identity}\\
    v^{\mathsf T}\tA v
    &=\Tr[\gamma O_v^2]-\Tr[\sqrt\gamma O_v\sqrt\gamma O_v]
      =\IWY(\gamma,O_v),\\
    v^{\mathsf T}\tN v
    &=\Tr[\gamma O_{v^2}]-\Tr[\sqrt\gamma O_v\sqrt\gamma O_v].
\end{align}
\end{subequations}
Here $\IWY(\gamma,O_v)$ is the Wigner--Yanase skew information~\cite{wigner1963information}. Thus, $\tU$ measures the mismatch between the classical second moment $O_{v^2}$ and the squared feature observable $O_v^2$, while $\tA$ measures the noncommutativity of $O_v$ with the prior. The following result shows that both contributions are faithful.

Note that the first identity, Eq.~\eqref{eq:first-identity}, connects $\tU$ with earlier variance-based notions of POVM unsharpness. The positive operator $O_{v^2}-O_v^2$ is usually called the intrinsic noise or uncertainty operator of the POVM with numerical outcomes $v_x$~\cite{hall2003algebra,hall2004prior,busch2004noise,massar2007uncertainty,busch2014rms}. More specifically, $\tU$ coincides entrywise with the state-dependent matrix introduced by Liu and Luo~\cite{liu2021quantifying}. In the present framework, this becomes one term of the retrodictive decomposition $\tN=\tU+\tA$.

\begin{theorem}[Zero conditions]
\label{thm:tensor-zero-conditions}
For every finite POVM $\mathsf P=\{P_x\}$ and faithful prior $\gamma>0$,
\begin{align}
    \tU=0
    &\quad\Longleftrightarrow\quad
    \mathsf P\text{ is a PVM},
    \label{eq:U-zero-condition}\\
    \tA=0
    &\quad\Longleftrightarrow\quad
    [\gamma,P_x]=0\quad\forall x.
    \label{eq:A-zero-condition}
\end{align}
Consequently, $\tN=\tU+\tA$ vanishes if and only if $\mathsf P$ is a PVM commuting with the prior. By Eq.~\eqref{eq:objective-sector-decomposition}, this is exactly the condition under which every outcome feature admits a pre-existing-value interpretation.
\end{theorem}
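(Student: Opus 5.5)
We prove the two equivalences separately. In each case we use the structure already exhibited in the proof of Theorem~\ref{theo:positive_semidefinite}: a positive semidefinite matrix vanishes exactly when its quadratic form vanishes on every vector. So it suffices to characterise when $c^\dagger\tU c=0$ and $c^\dagger\tA c=0$ for all $c\in\mathbb C^\X$.

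\emph{Asymmetry.} The proof of Theorem~\ref{theo:positive_semidefinite} shows that $\tA$ is the Gram matrix of the operators $[\sqrt\gamma,P_x]/\sqrt2$ in the Hilbert--Schmidt inner product. Its diagonal entries are therefore $\tA_{xx}=\tfrac12\|[\sqrt\gamma,P_x]\|_2^2$. Hence $\tA=0$ implies $[\sqrt\gamma,P_x]=0$ for every $x$. Since $\gamma=(\sqrt\gamma)^2$, this gives $[\gamma,P_x]=0$. Conversely, $\sqrt\gamma$ is a function of $\gamma$ (functional calculus, or a polynomial in $\gamma$ on its finite spectrum), so $[\gamma,P_x]=0$ implies $[\sqrt\gamma,P_x]=0$. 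Then every Gram entry vanishes and $\tA=0$.

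\emph{Unsharpness.} If $\mathsf P$ is a PVM, then $\{P_x,P_{x'}\}=2\delta_{xx'}P_x$, and $\tU=0$ follows directly from the definition. For the converse, assume $\tU=0$ and look at the diagonal entries:
\[
\tU_{xx}=\Tr[\gamma(P_x-P_x^2)]=0 .
\]
Since $0\le P_x\le\one$, the operator $P_x-P_x^2$ is positive semidefinite. Because $\gamma>0$ is faithful, a positive operator with zero $\gamma$-expectation must itself vanish. Indeed, $\Tr[\gamma B]\ge\lambda_{\min}(\gamma)\Tr[B]$ for $B\ge0$. Thus $P_x^2=P_x$, so each effect is a projection. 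Projections summing to $\one$ are automatically mutually orthogonal: from
\[
P_x=P_x\Big(\sum_{x'}P_{x'}\Big)P_x=P_x+\sum_{x'\neq x}P_xP_{x'}P_x
\]
we get $\sum_{x'\ne x}(P_{x'}P_x)^\dagger(P_{x'}P_x)=0$, hence $P_{x'}P_x=0$. So $\mathsf P$ is a PVM.

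The only delicate point is this converse direction for $\tU$, and it rests entirely on faithfulness of $\gamma$. Without it, $P_x$ could be unsharp outside the support of $\gamma$. The diagonal-entry argument handles this step.

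\emph{Consequences.} Since $\tU,\tA\ge0$ by Theorem~\ref{theo:positive_semidefinite}, $\tN=\tU+\tA=0$ holds iff both summands vanish. (For example, $0=c^\dagger\tN c$ forces $c^\dagger\tU c=c^\dagger\tA c=0$.) By the two equivalences, this is iff $\mathsf P$ is a PVM commuting with $\gamma$. Finally, $\tN=D-T$, so $\tN=0$ is exactly the retrodictive criterion $T=D$ of Definition~\ref{def:measurement_objectivity}. Combined with Eq.~\eqref{eq:objective-sector-decomposition}, this gives the stated pre-existing-value interpretation.
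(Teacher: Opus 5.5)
Your proof is correct and follows essentially the same route as the paper. For $\tU$ you use the diagonal entries (the paper's indicator features) together with faithfulness of $\gamma$, and for $\tA$ you use its Gram-matrix form in terms of $[\sqrt\gamma,P_x]$. You also fill in two steps the paper leaves implicit: the check that projections summing to $\one$ are mutually orthogonal, and the use of $\tU,\tA\geq0$ to reduce $\tN=0$ to both summands vanishing.
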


\begin{proof}
Suppose first that $\tU=0$. For the indicator feature $v^{(x)}_{x'}:=\delta_{xx'}$, one has $O_v=O_{v^2}=P_x$, and hence
\begin{equation*}
0=v^{\mathsf T}\tU v=\Tr[\gamma(P_x-P_x^2)].
\end{equation*}
Since $P_x-P_x^2\geq0$ and $\gamma>0$, this implies $P_x=P_x^2$ for every $x$. Thus $\mathsf P$ is a PVM. The converse follows immediately from $O_{v^2}=O_v^2$ for every feature $v$.

Similarly, $\tA=0$ implies, for every $x$,
\begin{equation*}
0=\frac12\Tr\left([\sqrt\gamma,P_x]^\dagger[\sqrt\gamma,P_x]\right),
\end{equation*}
and therefore $[\gamma,P_x]=0$. Conversely, if every $P_x$ commutes with $\gamma$, then $[\sqrt\gamma,O_v]=0$ for every $v$, and hence $\tA=0$.
\end{proof}

The two zero conditions identify successive obstructions to objectivity relative to $\gamma$. If $\tU\neq0$, the outcomes do not define mutually exclusive sharp properties in the first place. If $\tU=0$, the POVM is a PVM and its outcomes define sharp sectors; within this sharp regime, $\tA\neq0$ means that the prior contains coherence between those sectors and therefore does not admit the sector decomposition in Eq.~\eqref{eq:objective-sector-decomposition}. Their simultaneous vanishing is exactly the condition under which the outcome can be regarded as a faithful record of a sharp property present before the measurement.

\ 

\noindent {\em Monotonicity.}---We now show that, under a suitable class of prior-preserving quantum preprocessings, $\tU$ and $\tN$ cannot decrease, whereas $\tA$ cannot increase. Let
$\Phi:\mathcal B(\mathcal H)\to\mathcal B(\mathcal H)$ be a unital completely positive map acting on effects, and define
\begin{equation*}
    \Phi(\mathsf P):=\{\Phi(P_x)\}_x.
\end{equation*}
We consider maps that preserve both the prior and its modular structure.

\begin{definition}[$\gamma$-covariant maps]
\label{def:gamma-symmetric_map}
Let $\Phi^*$ denote the trace dual of $\Phi$, defined by
$\Tr[\Phi(A)B]=\Tr[A\Phi^*(B)]$. We call $\Phi$ $\gamma$-preserving if
\begin{equation}
    \Phi^*(\gamma)=\gamma.
\end{equation}
A $\gamma$-preserving map is $\gamma$-covariant if it is also covariant under the modular group of $\gamma$, namely,
\begin{equation}
    \Phi(\gamma^{it}X\gamma^{-it})
    =
    \gamma^{it}\Phi(X)\gamma^{-it},
    \qquad t\in\mathbb R.
\end{equation}
\end{definition}
Note that, in finite dimension, analytic continuation extends the above identity to
\begin{equation}\label{eq:complex-modular-covariance}
    \Phi(\gamma^sX\gamma^{-s})
    =
    \gamma^s\Phi(X)\gamma^{-s},
    \qquad s\in\mathbb C.
\end{equation}
In what follows, the cases $s=1/2$ and $s=1/4$ will be particularly useful.

Unital complete positivity ensures that $\Phi(\mathsf P)$ remains a POVM, while $\gamma$ preservation gives
\begin{equation*}
    \Tr[\gamma\Phi(P_x)]
    =
    \Tr[\Phi^*(\gamma)P_x]
    =
    \Tr[\gamma P_x].
\end{equation*}
Thus, $\mathsf P$ and $\Phi(\mathsf P)$ have the same reference outcome distribution and the same space of standardized features. Modular covariance also ensures that $[\gamma,X]=0$ implies $[\gamma,\Phi(X)]=0$. The asymmetry inequality below is closely related to known monotonicity properties of skew information~\cite{marvian2016how,takagi2019skew}; here it forms one part of the coupled tensor statement for unsharpness, asymmetry, and total non-objectivity.

\begin{theorem}[Monotonicity of the non-objectivity tensors]
\label{theo:monotonicity}
Let $\Phi$ be a $\gamma$-covariant map, and let $\tU'$, $\tA'$, and $\tN'$ denote the tensors associated with the transformed POVM $\Phi(\mathsf P)$. Then
\begin{equation}
    \tU'\geq\tU,
    \qquad
    \tA'\leq\tA,
    \qquad
    \tN'\geq\tN.
\end{equation}
\end{theorem}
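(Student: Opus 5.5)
The plan is to pass from tensors to quadratic forms and reduce every inequality to a Kadison--Schwarz inequality. All three tensors are real symmetric, so the Loewner inequalities follow from inequalities between the Hermitian forms $c\mapsto c^\dagger\mathbf{X}c$ for $c\in\mathbb C^{\X}$. Fix $c$, set $Q:=\sum_xc_xP_x$, and note that the transformed POVM gives $Q':=\sum_xc_x\Phi(P_x)=\Phi(Q)$ by linearity.

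The diagonal term $\sum_x|c_x|^2\Tr[\gamma P_x]$ is unchanged under $\Phi$, because $\gamma$-preservation keeps $p_x$ fixed, as noted above. The other two ingredients are
\begin{equation*}
c^\dagger S c=\tfrac12\Tr[\gamma(Q^\dagger Q+QQ^\dagger)],\qquad c^\dagger T c=\Tr[\sqrt\gamma Q^\dagger\sqrt\gamma Q].
\end{equation*}
Therefore $\tU'\ge\tU$ is equivalent to $S'\le S$, and $\tN'\ge\tN$ is equivalent to $T'\le T$. For $\tA$ I will use the commutator form from the proof of Theorem~\ref{theo:positive_semidefinite},
\begin{equation*}
c^\dagger\tA c=\tfrac12\|[\sqrt\gamma,Q]\|_2^2,
\end{equation*}
which holds for complex $c$ by sesquilinearity.

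\emph{Step 1: $S'\le S$.} This needs only unitality, complete positivity and $\gamma$-preservation, not covariance. By the Kadison--Schwarz inequality for unital CP maps, $\Phi(Q)^\dagger\Phi(Q)\le\Phi(Q^\dagger Q)$. Hence
\begin{equation*}
\Tr[\gamma\Phi(Q)^\dagger\Phi(Q)]\le\Tr[\gamma\Phi(Q^\dagger Q)]=\Tr[\Phi^*(\gamma)Q^\dagger Q]=\Tr[\gamma Q^\dagger Q].
\end{equation*}
The same argument with $QQ^\dagger$ completes the step and gives $\tU'\ge\tU$.

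\emph{Step 2: $T'\le T$.} Here I use the covariance identity \eqref{eq:complex-modular-covariance} with $s=1/4$. Put $R:=\gamma^{1/4}Q\gamma^{-1/4}$, so that $\Phi(R)=\gamma^{1/4}\Phi(Q)\gamma^{-1/4}$. A direct rearrangement under the trace gives $\Tr[\sqrt\gamma\Phi(Q)^\dagger\sqrt\gamma\Phi(Q)]=\Tr[\gamma\Phi(R)^\dagger\Phi(R)]$. The same computation without $\Phi$ gives $\Tr[\sqrt\gamma Q^\dagger\sqrt\gamma Q]=\Tr[\gamma R^\dagger R]$. Kadison--Schwarz and $\gamma$-preservation then yield
\begin{equation*}
\Tr[\gamma\Phi(R)^\dagger\Phi(R)]\le\Tr[\gamma\Phi(R^\dagger R)]=\Tr[\gamma R^\dagger R],
\end{equation*}
so $T'\le T$ and hence $\tN'\ge\tN$.

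\emph{Step 3: $\tA'\le\tA$.} I expect this to be the step needing the most care, since $\tA=S-T$ is a difference and cannot be handled termwise. The plan is to absorb the commutator into the argument of $\Phi$ using covariance with $s=1/2$:
\begin{equation*}
[\sqrt\gamma,\Phi(Q)]=\bigl(\gamma^{1/2}\Phi(Q)\gamma^{-1/2}-\Phi(Q)\bigr)\sqrt\gamma=\Phi(K)\sqrt\gamma,
\qquad K:=[\sqrt\gamma,Q]\gamma^{-1/2}.
\end{equation*}
Then
\begin{equation*}
\|\Phi(K)\sqrt\gamma\|_2^2=\Tr[\gamma\Phi(K)^\dagger\Phi(K)]\le\Tr[\gamma\Phi(K^\dagger K)]=\Tr[\gamma K^\dagger K]=\|K\sqrt\gamma\|_2^2=\|[\sqrt\gamma,Q]\|_2^2,
\end{equation*}
using Kadison--Schwarz and $\gamma$-preservation once more. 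This gives $\tA'\le\tA$.

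The remaining checks are bookkeeping. The analytic continuation to complex $s$ is assumed, as stated after Definition~\ref{def:gamma-symmetric_map}. The factors $\gamma^{\pm s}$ must sit on the correct sides in each trace manipulation. Restricting the Hermitian forms to real $v$ recovers the statement for outcome features.
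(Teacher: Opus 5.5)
Your proposal is correct and follows essentially the same route as the paper: Kadison--Schwarz plus $\gamma$-preservation for $\tU$, and modular covariance at $s=1/4$ and $s=1/2$ combined with the GNS contraction $\|\Phi(X)\|_\gamma\le\|X\|_\gamma$ for $\tN$ and $\tA$ respectively. The only difference is cosmetic. You test with complex $c$, which gives non-self-adjoint $Q$ and forces you to symmetrize with $Q^\dagger Q+QQ^\dagger$. The paper uses real features $v$ with self-adjoint $O_v$, which already suffices because all three tensors are real symmetric.
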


\begin{proof}
Fix a real feature $v$ and write
\begin{equation*}
    O_v'=\Phi(O_v),
    \qquad
    O_{v^2}'=\Phi(O_{v^2}).
\end{equation*}
The first inequality requires only $\gamma$ preservation. Indeed, the Kadison--Schwarz inequality~\cite{bhatia2009positive} gives
\begin{align*}
v^{\mathsf T}\tU'v
&=\Tr[\gamma\Phi(O_{v^2})]-\Tr[\gamma\Phi(O_v)^2]\nonumber\\
&\geq\Tr[\gamma\Phi(O_{v^2}-O_v^2)]
=v^{\mathsf T}\tU v,
\end{align*}
where the last equality follows from $\Phi^*(\gamma)=\gamma$.

For the remaining inequalities, introduce
\begin{equation*}
    \Delta_\gamma^s(X):=\gamma^sX\gamma^{-s},
    \qquad
    \|X\|_\gamma^2:=\Tr[\gamma X^\dagger X].
\end{equation*}
$\gamma$-preservation and Kadison--Schwarz imply the GNS contraction
\begin{equation*}
    \|\Phi(X)\|_\gamma^2
    \leq\Tr[\gamma\Phi(X^\dagger X)]
    =\|X\|_\gamma^2.
\end{equation*}
Moreover, analytically continued modular covariance gives $\Delta_\gamma^s\circ\Phi=\Phi\circ\Delta_\gamma^s$, for $s\in\mathbb C$.
We use this relation below with $s=1/2$ for $\tA$ and with $s=1/4$ for $\tN$.

Indeed, for every self-adjoint $H$,
\begin{equation*}
    \Tr[\gamma H^2]-\Tr[\sqrt\gamma H\sqrt\gamma H]
    =\frac12\|(\Delta_\gamma^{1/2}-\mathrm{id})(H)\|_\gamma^2.
\end{equation*}
Applying this identity to $H=O_v'=\Phi(O_v)$ and using modular covariance Eq.~\eqref{eq:complex-modular-covariance} with $s=1/2$, we obtain
\begin{equation*}
    (\Delta_\gamma^{1/2}-\mathrm{id})(\Phi(O_v))
    =
    \Phi\bigl((\Delta_\gamma^{1/2}-\mathrm{id})(O_v)\bigr).
\end{equation*}
The GNS contraction therefore gives
\begin{align*}
v^{\mathsf T}\tA'v
&=\frac12\left\|\Phi\left((\Delta_\gamma^{1/2}-\mathrm{id})(O_v)\right)\right\|_\gamma^2\nonumber\\
&\leq\frac12\|(\Delta_\gamma^{1/2}-\mathrm{id})(O_v)\|_\gamma^2
=v^{\mathsf T}\tA v.
\end{align*}

Finally, the identity
\begin{equation*}
    \Tr[\sqrt\gamma H\sqrt\gamma H]
    =\|\Delta_\gamma^{1/4}(H)\|_\gamma^2
\end{equation*}
treats the retrodictive term in $\tN$. Using Eq.~\eqref{eq:complex-modular-covariance} with $s=1/4$ gives
\begin{equation*}
    \Delta_\gamma^{1/4}(\Phi(O_v))
    =
    \Phi(\Delta_\gamma^{1/4}(O_v)).
\end{equation*}
$\gamma$-preservation leaves $\Tr[\gamma O_{v^2}]$ unchanged, while the GNS contraction then yields
\begin{align*}
v^{\mathsf T}\tN'v
&=\Tr[\gamma O_{v^2}]
-\|\Phi(\Delta_\gamma^{1/4}(O_v))\|_\gamma^2\nonumber\\
&\geq\Tr[\gamma O_{v^2}]
-\|\Delta_\gamma^{1/4}(O_v)\|_\gamma^2
=v^{\mathsf T}\tN v.
\end{align*}
Since the three inequalities hold for every real $v$, the claims follow.
\end{proof}

Thus, a $\gamma$-covariant preprocessing can remove prior asymmetry only by adding at least as much unsharpness:
\begin{equation}
    0\leq\tA-\tA'\leq\tU'-\tU.
\end{equation}
It may suppress coherence relative to the prior, but it cannot thereby create a pre-existing sharp property; total non-objectivity cannot decrease. The same tensor inequality immediately yields the corresponding scalar monotonicity.

\begin{corollary}[Monotonicity of scalar feature non-objectivity]
Let $\Phi$ be a $\gamma$-covariant map. Then
\begin{equation}
    \nu_\gamma\!\left(\Phi(\mathsf P)\right)
    \geq
    \nu_\gamma(\mathsf P).
\end{equation}
Equivalently, the HGR maximum correlation between the corresponding predictive and retrodictive labels cannot increase under $\Phi$.
\end{corollary}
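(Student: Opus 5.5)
The plan is to deduce the corollary directly from the tensor inequality $\tN'\geq\tN$ in Theorem~\ref{theo:monotonicity}. The only additional work is to check that the optimization in Definition~\ref{def:feature_non-objectivity} runs over the same feasible set for $\mathsf P$ and for $\Phi(\mathsf P)$. Both the objective and the constraints are built from $D=\diag(p_x)$ and $e$. As already noted after Definition~\ref{def:gamma-symmetric_map}, $\gamma$-preservation gives $\Tr[\gamma\Phi(P_x)]=\Tr[\Phi^*(\gamma)P_x]=p_x$. Hence $D'=D$, and the constraint set $\{v: v^{\mathsf T}De=0,\ v^{\mathsf T}Dv=1\}$ is literally the same for both POVMs. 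The absence of null effects is also preserved, since $p_x>0$ is unchanged. Therefore $\nu_\gamma(\Phi(\mathsf P))$ is well defined whenever $\nu_\gamma(\mathsf P)$ is.

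Next, I would pick a feasible $v^\star$ attaining the minimum for $\Phi(\mathsf P)$. The feasible set is compact, being an ellipsoid intersected with a hyperplane in finite dimension, so such a minimizer exists. Theorem~\ref{theo:monotonicity} then gives
\begin{equation*}
\nu_\gamma(\Phi(\mathsf P))=v^{\star\mathsf T}\tN'v^\star\geq v^{\star\mathsf T}\tN v^\star\geq\nu_\gamma(\mathsf P),
\end{equation*}
where the last step uses only the feasibility of $v^\star$ for $\mathsf P$. Equivalently, one can say that the pointwise Loewner inequality passes to the minimum over a common constraint set.

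For the HGR statement, I would invoke Eq.~\eqref{eq:HGR}, $\nu_\gamma=1-\xi_{\mathrm{HGR}}$, for each of the two POVMs. This identity uses only that the retrodictive joint law has identical marginals $p_x$ and that $D^{-1/2}TD^{-1/2}$ is symmetric positive semidefinite. Both properties hold for $T'$ as well, since it is the joint law of the POVM $\Phi(\mathsf P)$ with the same prior. Hence $\xi_{\mathrm{HGR}}(X,X')$ computed for $\Phi(\mathsf P)$ is at most its value for $\mathsf P$.

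There is no real obstacle here; the whole argument is bookkeeping. The one point worth stating explicitly is that $D$ is invariant under $\Phi$. Without it the two minimizations would be over different ellipsoids, and the tensor inequality alone would not compare them.
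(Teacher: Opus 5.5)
Your proposal is correct and follows the paper's own proof: $\gamma$-preservation gives $D'=D$, so the feasible set of standardized features is the same for both POVMs. Minimizing $v^{\mathsf T}\tN'v\geq v^{\mathsf T}\tN v$ over that common set then gives the first claim, and $\nu_\gamma=1-\xi_{\mathrm{HGR}}$ gives the HGR statement. Your extra checks (compactness of the feasible set, and that $T'$ is again a positive-semidefinite Gram matrix with equal marginals) are correct and fill in details the paper leaves implicit.
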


\begin{proof}
$\gamma$-preservation leaves the reference outcome distribution, and hence the set of standardized features, unchanged. Taking the minimum of the inequality $v^{\mathsf T}\tN'v\geq v^{\mathsf T}\tN v$ over this common set proves the first claim. The HGR statement follows from $\nu_\gamma(\mathsf P)=1-\xi_{\mathrm{HGR}}(X,X')$.
\end{proof}

\ 

\noindent {\em Conclusion.}---We have developed a retrodictive quantification of when a measurement is objective relative to $\gamma$. Its zero condition has a direct property-revealing meaning: when the POVM is sharp and commutes with the prior, the prior admits a canonical mixture decomposition over mutually exclusive outcome sectors, and the measurement can be regarded as revealing which pre-existing sharp property obtains. This interpretation applies to every real-valued function of the outcome. It is a prior-relative statement about the observable--prior pair: it neither fixes the physical preparation history of $\gamma$ nor depends on a particular instrument implementing the observable.

For a general POVM, the MPPP first extracts the finest sharp sector label revealed by the outcome. Within each remaining nontrivial MPPP block, three positive-semidefinite tensors resolve non-objectivity feature by feature. The total tensor measures disagreement between predictive and retrodictive feature values and decomposes exactly into unsharpness, which obstructs sharp properties, and Wigner--Yanase asymmetry, which obstructs prior compatibility. Optimizing the total tensor yields the HGR feature non-objectivity.

The tensor formulation also shows how these obstructions change under physical processing. A $\gamma$-covariant preprocessing cannot increase asymmetry, but it creates at least as much unsharpness as the asymmetry it removes, so total non-objectivity cannot decrease. Removing coherence relative to the prior is therefore not enough to make a measurement objective relative to $\gamma$: sharpness must be preserved as well. More broadly, our framework opens a new direction in which the pre-existence of measured properties is treated as a quantitative, feature-dependent notion rather than merely a yes-or-no condition.

\ 

\textit{Acknowledgments.}---F.~B. thanks M.~Hamed Mohammady for insightful discussions. J.~K. and F.~B. acknowledge support from MEXT Quantum Leap Flagship Program (MEXT QLEAP) Grant No.~ JPMXS0120319794. T.~N. acknowledges support from JST ERATO Grant Number JPMJER2402. K.~T. acknowledges support from JST SPRING Grant No.~JPMJSP2125. F.~B. acknowledges support also from JSPS KAKENHI Grants No.~23K03230 and~26K00621.  

\bibliography{library}

\end{document}